\newtheorem{theorem}[]{Theorem}
\newtheorem{lemma}[]{Lemma}
\DeclareMathOperator{\ima}{Im}
\DeclareMathOperator{\rank}{rank}
\begin{document}

%\twocolumn[
%\icmltitle{A Persistent Homology Approach to Time Series Classification}

%\icmlsetsymbol{equal}{*}

%\begin{icmlauthorlist}
%~
%\end{icmlauthorlist}

\title{A Persistent Homology Approach to Time Series Classification}
\date{}
\author{Yu-Min Chung$^*$, William Cruse\footnote{University of North Carolina at Greensboro, Department of Mathematics and Statistics}, and Austin Lawson\footnote{University of North Carolina at Greensboro, Program of Informatics and Analytics}}
\maketitle

%\icmlkeywords{Time Series, Topological Data Analysis, Persistent Homology}

%\vskip 0.3in
%]

%\printAffiliationsAndNotice{\icmlEqualContribution}.

\begin{abstract}

Topological Data Analysis (TDA) is a rising field of computational topology in which the topological structure of a data set can be observed by persistent homology. 
By considering a sequence of sublevel sets, one obtains a filtration that tracks changes in topological information.  These changes can be recorded in multi-sets known as {\it persistence diagrams}. 
Converting information stored in persistence diagrams into a form compatible with modern machine learning algorithms is a major vein of research in TDA.
{\it Persistence curves}, a recently developed framework, provides a canonical and flexible way to encode the information presented in persistence diagrams into vectors.
In this work, we propose a new set of metrics based on persistence curves. We prove the stability of the proposed metrics.  Finally, we apply these metrics to the UCR Time Series Classification Archive. These empirical results show that our metrics perform better than the relevant benchmark in most cases and warrant further study.

\end{abstract}

\section{Introduction}
In recent years, Topological Data Analysis (TDA) has expanded rapidly. With applications in  health research \cite{li2015identification}, 3D modeling \cite{pht}, and aviation \cite{li2019topological}, the field is poised to become a staple of data analytics. Among the tools available to a topological data analyst is persistent homology. This tool considers a dataset at all scales simultaneously via sequence of topological spaces and tracks when topological features appear (are born) and disappear (die) within the sequence. These birth-death times can be collected and stored in a multi-set known as a persistence diagram; however, these diagrams are not compatible with modern machine and deep learning algorithms. This discrepancy has lead to several projects dedicated to summarizing persistence diagrams in such a way that preserves the topological information and is compatible with machine and deep learning. Summaries such as the persistence images \cite{adams2017persistence}, the persistence landscapes \cite{bubenik2015statistical}, and the persistent entropy \cite{persistentEntropy}. The particular interest in this paper is the persistence curve framework \cite{chunglawson2019}. In that work, they showed this framework capable of generating new summaries as well as several well-known summaries such as persistence landscapes and persistent entropy.  Recently, there has been some work on applying TDA tools to classify time series.  For instance, in these work  \cite{umeda2017time}, \cite{venkataraman2016persistent}, and \cite{seversky2016time}, authors used lag map method to reconstruct the attractors and utilized TDA tools (more precisely, the Rips complex filtration) to extract features from those attractors. In \cite{marchese2018signal}, authors defined a new distance on the space of persistence diagram and used this new distance to classify time series.  In \cite{wang2018topological} and \cite{chung2019persistent}, authors used TDA tools to analyze physiological signals. 

In this paper, our main contribution is to use persistence curves to define two ensemble metrics for the difference of two time series. One of the ensemble metrics uses the Euclidean 1-norm while the other uses the Dynamic Time Warping (DTW) Distance between time series introduced in \cite{berndt1994using}. In Section \ref{sec:background}, we provide a gentle introduction to persistent homology and persistence curves and explain how we extract these objects from time series. In Section \ref{sec:PCDTW}, we introduce the topological transformation of a time series, and the definition of our proposed metrics is presented in Section~\ref{sec:new metric}.
Finally, in Section \ref{sec:applications}, we apply these new distances to the UCR Time Series archive \cite{UCRArchive2018} via a 1-NN test and compare our results to the reported benchmark in the database.

\section{Mathematical Background}\label{sec:background}
We will lay out the basic mathematical concepts to equip the reader with the tools to understand the models developed later. We begin by discussing time series and Dynamic Time Warping before moving on to homology, persistent homology, and persistence curves. 
\subsection{Time Series and Dynamic Time Warping}
For a natural number $n\in\mathbb{N}$ let $[n] = \{0,1,\ldots,n\}$. A \textbf{time series} $s$ is a function $s:[n]\to\mathbb{R}$. We commonly write $s_i$ for $s(i)$. Given a time series $s$ we can produce a continuous piecewise linear functional representation $f_s$ of $s$ by linearly interpolating between the points of $s$. One can show in this setting that the 1-norm of a time series bounds the 1-norm of its functional representation from above and similarly for the $\infty$-norm. That is $\|s\|_1\ge \|f_s\|_1$ and $\|s\|_{\infty}\ge \|f_s\|_\infty$ where $\|s\|_1 = \sum_{i=1}^n|s_i|, \|s\|_\infty = \max(|s|), \|f_s\|_1 = \int_{0,n}|f_s(x)\mathrm{d}x, \|f_s\|_\infty = \max(|f_s|)$. Moreover, we have $\|s\|_{\infty}\le \|s\|_1$ . Note these inequalities are achievable due to the fact that $f_s$ is continuous, bounded, and defined on a compact domain. { To prove that the discrete 1-norm is larger than the continuous 1-norm suppose we have a non-negative time series $s = (s_i,s_{i+1})$. The corresponding piecewise linear function is $f_i(x) = (s_{i+1} - s_i)(x-i)+s_i$. Integrating this function from $i$ to $i+1$ gives $A_i = (s_{i}+s_{i+1})/2$; however, $\|s\|_1 = s_{i}+s_{i+1}$. Consider $t = (t_0,\ldots t_{i+1})$. The area under the corresponding piecewise linear curve $f$ is given by $\|f\|_1 = A_0 + A_1 + \ldots + A_{i}$. But this is exactly $\|f\|_1 = t_0/2 + t_1 + \ldots + t_i + t_{i+1}/2 \le t_0 + t_1 + \ldots + t_i + t_{i+1} = \|t\|_1$. So if $s$ is any time series with corresponding function $f$ then one may consider the time series $t$ where $t_i = |s_i|$. This is a nonnegative time series whose corresponding function is exactly defined by $g(x) = |f(x)|$. Moreover, $\|s\|_1 = \|t\|_1$ and $\|f\|_1 = \|g\|_1$. The previous argument applies and we see $\|s\|_1\ge\|f\|_1$}

We will now discuss Dynamic Time Warping \cite{berndt1994using}. For the rest of this section, let $s$ and $t$ be time series defined on $[n]$ and $[m]$ respectively. 
A \textbf{warping path} $w$ between $s$ and $t$ is a function $w:[N]\to[n]\times[m]$ so that $w(1) = (0,0), w(N) = (n,m)$ and $w_{i+1}-w_i\in\{(1,0),(1,1),(0,1)\}$,  where $N$ is the length of the path from $(0,0)$ to $(n,m)$. We denote the collection of warping path between $s$ and $t$ by $\Omega_{s,t}$.  A warping path induces two time series $s'$ and $t'$ both defined on $[N]$ where $s'_i = s_{\pi_1(w(i))}$ and $t'_i = t_{\pi_2(w(i))}$, where $\pi_i$ is the projection to the $i$-th coordinate for $i=1,~2$. 

Let $k:\mathbb{R}^2\to\mathbb{R}$ be non-negative. The cost of a warping path $w$ between $s$ and $t$ can be defined as $K(w) = \sum_{i=1}^N k(s'_i,t'_i)$. In this paper, we choose $k(x,y) = |x-y|$. For instance, suppose $n=m$, and consider the ``diagonal'' path $w_0 = \{ (i,i) \}_{i=1}^n$. Then 
$K(w_0) = \sum_{i=1}^{n} k(s_i, t_i) = \sum_{i=1}^{n} | s_i - t_i| = \| s - t\|_1.$

The \textbf{Dynamic Time Warping distance} (DTW distance) between $s$ and $t$ is defined to be the minimum cost over all warping paths. That is, 
\begin{equation}
    \|s-t\|_{DTW} := \min_{w\in \Omega_{s,t}} K(w).
    \label{def:dtw distance}
\end{equation}
There is a relation between DTW distance and Euclidean distance. Observe that since $w_0 \in \Omega_{s,t}$, one has 
\begin{equation}
    \label{equ:relation between dtw and 1norm}
    \|s-t\|_{DTW} \leq \| s - t\|_1.
    \end{equation}

It is well known that the functional representations of any two time series - two 1-dimensional lines without loops - are homeomorphic, i.e. either one may be continuously morphed into the other without punctures or tears. Heuristically, one can view the DTW distance measure of cost for that morphing. 

\subsection{Persistent Homology}
Homology is a classic subject in mathematics that provides information about a topological space that is invariant under continuous deformations. This means that the homology of a space gives us useful information about its topological structure.
Informally, the $k$-th homology group $H_k(X)$ of a space $X$ is often used to count the \textbf{{$k$-th Betti}} numbers, i.e. the number of $k$-dimensional holes of $X$. For example,  $H_0(X)$ counts the connected components and $H_1(X)$ counts loops, $H_2(X)$ counts voids or ``air pockets'', and so on.

 Persistent homology, first defined in \cite{edelsbrunner2000topological} is a powerful tool from the field of topological data analysis that tracks the changes of homology over a filtration.

A \textbf{filtration} of a space $X$ is an increasing sequence of spaces $\emptyset = X_0\subset X_1\subset\ldots\subset X_n = X$.  One may obtain homology groups for each $X_i$.  Because of the subset relations, we can track the changes of homology groups. This process is what we call persistent homology. For a time series $s$ with corresponding continuous piecewise linear function $f_s$, we can obtain a filtration by considering sublevel sets $f_s^{-1}((-\infty, x]) = \{y\in \mathbb{R})\mid f_s(y)\in(-\infty,x]\}$. The inclusion $X_t\subset X_s$ induces a map $g^k_{t,s}:H_k(X_t)\to H_k(X_s)$ between the homology groups. We note that homology in this sequence only changes at critical values of $f$. \cite{edelsbrunner2010computational}

 We say a homology class $\alpha$ is \textbf{born} at $b$ if we have $\alpha\in H_k(\mathcal{X}_{b})$ and $\alpha\notin \ima g^k_{b-1,b}$.  We say that $\alpha$ born at $b$ \textbf{dies} at $d$, $d\ge b$ if $g^k_{b,d-1}(\alpha)\notin \ima g^k_{b-1,d-1}$, but $g^k_{b,d}(\alpha)\in \ima g^k_{b-1,d}$, i.e. if it merges with a previous class. The ranks $\beta^k_{b,d} = \rank\ima g^k_{b,d}$ for $d\ge b$ form the \textbf{persistent Betti numbers} of the filtration. 
 
 These persistent Betti numbers count the number of classes that were born at or before $b$ and are still alive at $d$. Inclusion-exclusion allows us to count exactly the number $\mu^k_{b,d}$ of classes born at $b$ and die at $d$ by $\mu^k_{b,d} = \beta^k_{b,d-1} - \beta^k_{b-1,d-1} + \beta^k_{b-1,d} - \beta^k_{b,d}$. The \textbf{$k$-th persistence diagram}, or just \textbf{diagram}, $\mathcal{P}_k(f)$ associated to the filtering function $f$ of a space $X$ is a multi-set, that is a set of points with multiplicity, of birth-death pairs $(b,d)$ with multiplicity $\mu^k_{b,d}$ along with the diagonal points $(b,b)$ each with infinite multiplicity. To shorten notation, we will often represent persistence diagrams with the letter $D$.
 
The stability theorem for persistence diagrams, which first appeared in \cite{cohen2007stability}, states that if $f_1, ~f_2$ are functions that have finitely many critical values then \[W_\infty(\mathcal{P}_k(f_1),\mathcal{P}_k(f_2))\le \|f_1-f_2\|_\infty,\]
where $W_\infty(\mathcal{P}_k(f_1),\mathcal{P}_k(f_2))$ is known as the \textbf{Wasserstein $\infty$-metric}, or \textbf{bottleneck distance} defined as

\[W_\infty(D_1,D_2) = \inf\limits_{\stackrel{\hbox{bijections }}{{\eta:D_1\to D_2}}}\sup_{x\in D_1}\|x-\eta(x)\|,\]
where $D_1 = \mathcal{P}_k(f_1)$, $D_2= \mathcal{P}_k(f_2)$.

 This stability theorem tells us that if the original space is altered by a small amount, then the diagrams are altered by a small amount. Because diagrams are not readily compatible with machin learning, we need to summarize them in some way. In the next section, we will discuss the persistence curve framework, which is capable of generating summaries of diagrams.

\subsection{Persistence Curves}
Persistence Curves, first defined in \cite{chunglawson2019}, draws inspiration from the Fundamental Lemma of Persistent Homology, which states that the Betti number of an element in the filtration $\beta_k(f^{-1}((-\infty,x])$ is given by $\sum_{i\le x}\sum_{j>x}\mu^k_{a_i,a_j}$. The statement here says that the $k$-th Betti number of the subspace $f^{-1}((-\infty,x])$ is given exactly by the number of points in the box formed by the points up and to the left of the diagonal at $x$.

Persistence curves take this concept and generalize it. Instead of just counting the points, we will place a function over the points and compute some statistic over the aforementioned box. The idea is formalized follows.

Let $D$ be a diagram and let $x,y,z\in\mathbb{R}$. Suppose $\psi(D,x,y,z)$ is a real-valued function so that $\psi(D,x,x,z) = 0$. Let $T$ be a statistic, or a function of multi-set. We can define a persistence curve as:
\[P(D,\psi,T)(x) = T(\{\psi(D,b,d,x)\mid b\le x < d \}.\]
We refer to the set $\{(b,d)\mid b\le x < d\}$ as a \textbf{fundamental box} at $x$. Figure~\ref{FLPH} illustrates a fundamental box.
As a quick example, if $T=\Sigma$ is the sum statistic and if $\mathbf{1}(D,x,y,z) = 1$ when $x\neq y$ and $0$ otherwise, then $P(D,\mathbf{1}, \Sigma)$ produces the Betti number curve. That is to say $P(D,\mathbf{1}, \Sigma)(x) = \beta(x)$. 

Generally, one chooses $\psi$ to be a function that carries useful information about the diagram points.  In this paper, we are interested in one particular curve called the stabilized life curve. If $\ell(x,y) = y-x$ then with $\psi(D,x,y,t) = \frac{\ell(x,y)}{\sum_{(b,d)\in D)}\ell(b,d)}$ we define the stabilized life curve as $\mathbf{sl} \equiv P(D,\psi,\Sigma)$. Figure~\ref{fig:PCWorkflow} illustrates the workflow of beginning with a time series, then computing the diagram via sublevel sets before finally computing the stabilized life curve. The top three images constitute a toy example in which the birth and death coordinates are connected to the cooresponding time series points.
\begin{figure}
    \centering
    \includegraphics[width=\textwidth]{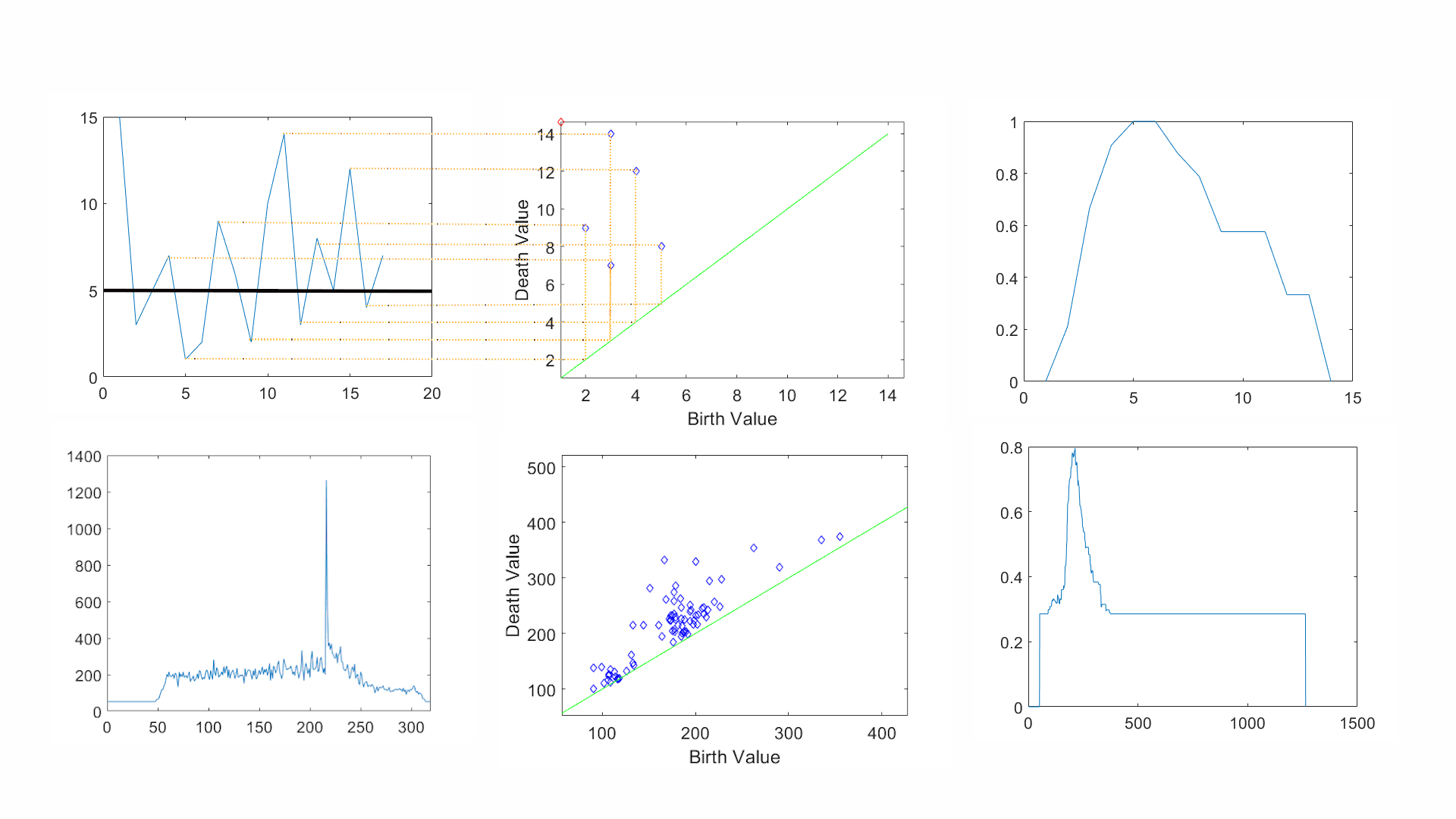}
    \caption{Two examples of the persistence curve workflow.} 
    \label{fig:PCWorkflow}
\end{figure}

A general stability bound for persistence curves was established in \cite{chunglawson2019}, which we state below.
\begin{theorem}\label{pcbound}\cite{chunglawson2019}
Let $D_1,D_2$ be persistence diagrams. Let $T=\Sigma$ be the sum statistic. Suppose that $T(\emptyset) = 0$. Let $\psi(D_1, \cdot)$ and $\psi(D_2, \cdot)$ be continuous functions. Then there exists a constant $C$ such that the following estimate holds
\begin{align}
    \| P(D_1, \psi, \Sigma) - P(D_2, \psi, \Sigma)\|_1 \leq &CW_\infty(D_1,D_2).
\end{align}
\end{theorem}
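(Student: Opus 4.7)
The plan is to leverage an (almost-)optimal bottleneck matching $\eta : D_1 \to D_2$ to decompose the $L^1$ difference point-by-point, then bound each pair's contribution using the continuity of $\psi$ together with the measure of the symmetric difference of the two associated lifespan intervals.

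First, I would rewrite each persistence curve as a sum indexed by its diagram points:
\begin{equation*}
P(D_i, \psi, \Sigma)(x) \;=\; \sum_{(b,d)\in D_i} \psi(D_i, b, d, x)\, \mathbf{1}_{[b,d)}(x),
\end{equation*}
where diagonal points $(b,b)$ contribute zero thanks to the hypothesis $\psi(D, x, x, z) = 0$. For any $\varepsilon > 0$, fix a bijection $\eta : D_1 \to D_2$ whose supremum cost is at most $W_\infty(D_1, D_2) + \varepsilon$, matching off-diagonal points of one diagram with diagonal points of the other whenever necessary. For each pair $p_1 = (b_1, d_1) \in D_1$ and $p_2 = \eta(p_1) = (b_2, d_2)$, set
\begin{equation*}
\Delta_{p_1}(x) \;:=\; \psi(D_1, b_1, d_1, x)\, \mathbf{1}_{[b_1, d_1)}(x) \;-\; \psi(D_2, b_2, d_2, x)\, \mathbf{1}_{[b_2, d_2)}(x),
\end{equation*}
so that $P(D_1)(x) - P(D_2)(x) = \sum_{p_1 \in D_1} \Delta_{p_1}(x)$, a finite sum once diagonal-diagonal pairs are removed.

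Second, via the triangle inequality I would split the integral of $|\Delta_{p_1}|$ into a \emph{common-support} part (where both indicators equal $1$) and a \emph{symmetric-difference} part supported on $[b_1, d_1) \triangle [b_2, d_2)$. The Lebesgue measure of the symmetric difference is at most $|b_1 - b_2| + |d_1 - d_2| \le 2(W_\infty(D_1, D_2) + \varepsilon)$; continuity of $\psi$ on the compact region carved out by the finitely many relevant off-diagonal coordinates gives a uniform bound $M$ on $|\psi|$ there, so this part contributes at most $2 M (W_\infty + \varepsilon)$. On the common support, whose length is bounded by the diagram's diameter, I would use (uniform) continuity of $\psi$ in its birth/death arguments to bound the pointwise integrand $|\psi(D_1, b_1, d_1, x) - \psi(D_2, b_2, d_2, x)|$ linearly by $W_\infty(D_1, D_2) + \varepsilon$. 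Summing over the finite set of pairs with at least one off-diagonal endpoint and letting $\varepsilon \to 0$ then yields the advertised $C \cdot W_\infty(D_1, D_2)$ estimate, with $C$ absorbing the diameter bound, the sup-bound $M$, the number of off-diagonal pairs, and the constant in the continuity estimate.

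The main obstacle is the common-support estimate: mere continuity of $\psi$ furnishes only a modulus-of-continuity bound, not a linear one, so an implicit strengthening to Lipschitz-type control in $(b,d)$ (or exploitation of the specific algebraic form of $\psi$, as with the stabilized life curve) is doing real work. A secondary obstacle is that $\psi(D, \cdot)$ depends on the whole diagram $D$; moving from $D_1$ to $D_2$ perturbs this global dependence (for the stabilized life curve, it perturbs the normalizer $\sum_{(b',d') \in D}(d'-b')$), so one must invoke an auxiliary stability estimate for that diagram-level functional and then combine the two perturbations via the triangle inequality before carrying out the pair-by-pair decomposition.
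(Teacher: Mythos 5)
The paper does not actually prove this statement: it is imported verbatim from \cite{chunglawson2019}, and the text here only remarks that the constant $C$ depends on $\psi$ and on the number of points in $D_1$ and $D_2$, deferring the argument and the explicit form of $C$ to that reference. So there is no in-paper proof to compare against. That said, your sketch follows the standard (and, as far as the cited source goes, the intended) route: write $P(D_i,\psi,\Sigma)(x)$ as $\sum_{(b,d)\in D_i}\psi(D_i,b,d,x)\,\mathbf{1}_{[b,d)}(x)$, pick a near-optimal bottleneck bijection, and split each matched pair's contribution into a symmetric-difference part (bounded by $2(W_\infty+\varepsilon)$ times a sup bound on $\psi$) and a common-support part. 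The decomposition is sound, the diagonal points are handled correctly by $\psi(D,x,x,z)=0$, and summing over the finitely many off-diagonal pairs produces a constant of exactly the advertised type (depending on $\psi$ and the cardinalities of the diagrams).

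The two caveats you raise at the end are not cosmetic --- they are genuine gaps in the theorem \emph{as stated here}, and you are right not to wave them away. Mere continuity of $\psi(D_i,\cdot)$ cannot yield a bound that is linear in $W_\infty(D_1,D_2)$: taking $\psi(D,b,d,x)=g(d-b)$ with $g$ continuous, $g(0)=0$, but with a square-root-type modulus at some interior point (e.g.\ $g(u)=u+\sqrt{|u-1|}-1$, with $D_1=\{(0,1)\}$ and $D_2=\{(0,1+\delta)\}$) gives $\|P_1-P_2\|_1\sim\sqrt{\delta}$ against $W_\infty=\delta$. So the common-support estimate really does require Lipschitz-type control in the birth--death arguments, or the specific algebraic form of the curve in use (for the stabilized life curve $\mathbf{sl}$, the map $(b,d)\mapsto(d-b)/\sum\ell$ is Lipschitz, so the argument closes). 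Likewise, the global dependence of $\psi$ on the whole diagram --- the normalizer $\sum_{(b,d)\in D}\ell(b,d)$ for $\mathbf{sl}$ --- needs its own perturbation bound (that the total life of $D_1$ and $D_2$ differ by at most $2\,|D|\,W_\infty$, roughly), inserted via a triangle inequality before the pair-by-pair decomposition. These are precisely the ingredients hidden inside the constant $C$ that the paper declines to make explicit; your proposal would be a complete proof once the continuity hypothesis is upgraded accordingly, and as written it correctly isolates where the stated hypotheses are too weak.
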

In essence, the constant $C$ depends on the function $\psi$, and number of points in $D_1$ and $D_2$.  \cite{chunglawson2019} provides an explicit form of $C$. Since we will not use the its explicit form in this work, we refer interested readers to \cite{chunglawson2019} for more details about the constant $C$.

In practice, we need to discretize the persistence curves. Given a persistence curve $P$ and a mesh of $M$ points $\{x_i\}_{i=1}^{M}\subset \mathbb{R}$. The corresponding sequence $\bar{P} = \{P(x_i)\}$ is called a \textbf{discretized persistence curve}. As the number of points in the mesh increases, the discretized persistence curve approaches its continuous counterpart. Moreover, the proof of the stability for the continuous version can be altered to provide a proof for the same stability bound for the discretized curves, provided that the mesh contains all birth and death values of the diagram.

\section{Topological Transformation of Time Series}\label{sec:PCDTW}
Our main method transforms a time series into a set of persistence curves.  This set of curves will be a topological representation of time series. The focus of this section is to show that such topological transformation is stable with a given metric.  More precisely, a small perturbation in the original time series leads to a small change in the persistence curve.  We prove stability results with two different metrics.

Let $s$ be a time series.  We consider the usual continuous, piece-wise linear function representation $f_s$ of $s$. We also consider a persistence diagrams of $f_s$, $D(f_s)$, by sublevel set filtration. Lastly, we use a persistence curve to summarize $D(f_s)$. To ensure robustness, we provide a stability bound for the DTW distance on the discretized persistence curve. Prior to this, we show an interesting result bounding the bottleneck distance by the DTW distance.
\begin{lemma}\label{lem:bottleneckDTW}
Let $s,t$ be time series and suppose $f_s,f_t$ be their corresponding continuous piece-wise linear functions. Let $\{x_i\}_{i=1}^n$ be a mesh fine enough to contain all birth and death values of $D(f_s)$ and $D(f_t)$. Then $$W_\infty(D(f_s),D(f_t))\le \|s-t\|_{DTW}.$$
\end{lemma}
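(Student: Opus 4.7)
The plan is to reduce to the classical Cohen-Steiner stability theorem by exploiting the invariance of the sublevel persistence diagram under the reparametrizations that warping paths induce. Pick an optimal warping path $w^*\in\Omega_{s,t}$ realizing $\|s-t\|_{DTW}$, and let $s',t'$ be the two induced warped time series of common length $N$, with piecewise linear interpolants $f_{s'}$ and $f_{t'}$ defined on $[0,N]$.

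The heart of the argument is the identification $D(f_s)=D(f_{s'})$ (and, by the same reasoning, $D(f_t)=D(f_{t'})$). The warping path provides a non-decreasing continuous surjection $\phi\colon[0,N]\to[0,n]$, piecewise linear between consecutive integers, such that $f_{s'}=f_s\circ\phi$: the equality is definitional at integer indices via $s'_i=s_{\pi_1(w^*(i))}=s_{\phi(i)}$, and on each unit subinterval one verifies it separately for the ``stay'' steps (both sides are the constant $s_{\phi(i)}$) and the ``advance'' steps (both sides are linear with matching endpoints). Since $f_{s'}^{-1}((-\infty,c])=\phi^{-1}(f_s^{-1}((-\infty,c]))$ for every $c$, and each fiber of $\phi$ is either a point or a closed interval, $\phi$ restricts to a homotopy equivalence on every sublevel set in a manner natural in the filtration inclusions. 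The resulting isomorphism of persistence modules forces $D(f_{s'})=D(f_s)$.

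Once the diagrams are identified, the standard stability theorem for sublevel filtrations gives
\[
W_\infty(D(f_s),D(f_t)) \;=\; W_\infty(D(f_{s'}),D(f_{t'})) \;\le\; \|f_{s'}-f_{t'}\|_\infty.
\]
Because $f_{s'}$ and $f_{t'}$ are piecewise linear with common breakpoints at the integers in $[0,N]$, the difference $|f_{s'}-f_{t'}|$ attains its supremum at an integer, so
\[
\|f_{s'}-f_{t'}\|_\infty \;=\; \max_{i}|s'_i-t'_i| \;\le\; \sum_{i=1}^N |s'_i-t'_i| \;=\; K(w^*) \;=\; \|s-t\|_{DTW},
\]
which completes the proof.

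The main obstacle is the reparametrization invariance of the diagram; making the homotopy-equivalence step rigorous at the level of persistence modules ordinarily requires an appeal to functoriality, but in this restricted piecewise linear setting it can also be checked directly by observing that warping merely inserts constant plateaus into $f_s$, and plateaus alter neither the critical values nor the combinatorics of the sublevel filtration. Everything past that point is a short chain of standard inequalities, and the mesh hypothesis plays no role beyond guaranteeing that birth and death values are preserved under any later discretization.
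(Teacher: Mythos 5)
Your proposal is correct and follows essentially the same route as the paper's proof: warp both series along an optimal path, show the sublevel-set diagrams are unchanged by the warping, and then chain the Cohen--Steiner stability bound with $\|f_{s'}-f_{t'}\|_\infty\le\|s'-t'\|_1=\|s-t\|_{DTW}$. Your justification of the key identification $D(f_s)=D(f_{s'})$ via the monotone reparametrization $\phi$ and functoriality of sublevel sets is in fact tighter than the paper's informal ``same critical values'' argument, but it is the same idea executed more carefully rather than a different method.
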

\begin{proof}
let $w^*$ be the optimal path satisfying DTW distance between $s$ and $t$. Denote the induced time series by $s^* = \{ s(\pi_1(w^*(i)) \}_{i=1}^{N}$ and $t^* = \{ t(\pi_2(w^*(i)) \}_{i=1}^{N}$, where $N$ is the length of the optimal path $w^*$,

and corresponding functions $f_{s^*}$ and $f_{t^*}$. By the definition of a warping path, namely that $w(i+1) - w(i) \in\{(1,0),(1,1),(0,1)\}$ we have that $w$ can change in either coordinate by at most one and it cannot decrease. {The proof of this is extraordinarily tedious. How about some heuristic like: The critical values of $f_s$ occur exactly at values of the time series. Let $s_i$ be the height of a critical value that causes a change in the homology of the sublevel set filtration. Let $j>i$ be the smallest index for which $s_i\neq s_j$ and let $k<i$ be the largest index so that $s_i\neq s_k$. Then either  $s_i>s_k$ and $s_i>s_j$ or $s_i<s_k$ and $s_i<s_j$.  By definition of DTW, there is some $i^*$ for which $s_i = s^*_{i^*}$. Moreover, because $w^*$ can only change the first coordinate by at most one, the closest indices below and above $i^*$ for which heights are difference from $s^*_{i^*}$ have heights $s_j$ and $s_k$. That is to say that $s_{i^*}^*$ is the height of a critical value that changes homology in the sublevel set filtration. A similar argument shows that these critical values for $f_{s^*}$ are critical values of $f_s$. } 
Therefore, $f_{s^*}$ has the same critical points as $f_s$ hence $D(f_s) = D(f_{s^*})$.  The same argument shows $D(f_t) = D(f_{t^*})$.  From the definition of DTW distance, one has $\|s^*-t^*\|_1 = \|s-t\|_{DTW}$. Once we put it all together we have the following string of inequalities:
\begin{align*}
    W_\infty(D(f_{s^*}),D(f_{t^*})) &\le \|f_{s^*}-f_{t^*}\|_{\infty}\\
    &\le \|s^*-t^*\|_{\infty}\\
    &\le \|s^*-t^*\|_1 = \|s-t\|_{DTW}.
\end{align*}
\end{proof}
This lemma, to the best of our knowledge, is the first result that connects persistence diagram of sublevel set filtration of 1D function and DTW.  It would be interesting to investigate relation between optimal match in Bottleneck distance and warping path in DTW.

We are now ready to present our main result.

\begin{theorem}
\label{thm:main}
Let $\psi$ and $T=\Sigma$ be the sum statistic. Let $s,~t$ be time series and suppose $f_s,~f_t$ be their corresponding continuous piecewise linear functions. Let $\{x_i\}_{i=1}^n$ be a mesh fine enough to contain all birth and death values of $D(f_s)$ and $D(f_t)$. 
 
Then there is a constant $C$, which is the same constant as in Theorem~\ref{pcbound}, such that% 
\begin{equation}
    \|\bar{P}(D(f_s),\psi,T) - \bar{P}(D(f_t),\psi,T)\|_{X}\le C\|s-t\|_{X},
\end{equation}
where $X = 1$ or $X = DTW$. 
\end{theorem}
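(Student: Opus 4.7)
The plan is to chain together three ingredients already in hand: Theorem~\ref{pcbound}, which bounds the persistence-curve difference by $W_\infty$; Lemma~\ref{lem:bottleneckDTW}, which bounds $W_\infty$ by the DTW distance; and inequality~(\ref{equ:relation between dtw and 1norm}), which says the DTW distance is dominated by the $1$-norm. A preliminary observation is that Theorem~\ref{pcbound} is stated for the continuous curve $P$, but as the paper remarks just before the theorem, the same proof yields the same constant $C$ for the discretized curve $\bar P$ provided the mesh contains every birth and death value of both diagrams; this is precisely the hypothesis on $\{x_i\}$, so the discretized version of Theorem~\ref{pcbound} may be used without modification.

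For the case $X=1$, I would just compose the three tools:
\begin{align*}
\|\bar P(D(f_s),\psi,\Sigma)-\bar P(D(f_t),\psi,\Sigma)\|_1
&\le C\,W_\infty(D(f_s),D(f_t)) \\
&\le C\,\|s-t\|_{DTW} \\
&\le C\,\|s-t\|_1,
\end{align*}
where the first inequality is the discretized Theorem~\ref{pcbound}, the second is Lemma~\ref{lem:bottleneckDTW}, and the third is inequality~(\ref{equ:relation between dtw and 1norm}).

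For the case $X=DTW$, the first move is on the \emph{left-hand side} rather than the right. Because the discretized persistence curves are themselves time series sampled on a common mesh, inequality~(\ref{equ:relation between dtw and 1norm}) applies to their difference and gives
\[
\|\bar P(D(f_s),\psi,\Sigma)-\bar P(D(f_t),\psi,\Sigma)\|_{DTW}\le \|\bar P(D(f_s),\psi,\Sigma)-\bar P(D(f_t),\psi,\Sigma)\|_1.
\]
From here the chain is identical to the previous case but stops one step earlier: the discretized Theorem~\ref{pcbound} bounds this $1$-norm by $C\,W_\infty(D(f_s),D(f_t))$, and Lemma~\ref{lem:bottleneckDTW} closes the argument with $W_\infty(D(f_s),D(f_t))\le \|s-t\|_{DTW}$.

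The only place that needs care is the discretization step, i.e.\ verifying that both curves are evaluated on a mesh fine enough for Theorem~\ref{pcbound} to transfer with the same constant $C$; that condition is already built into the hypothesis, so no new work is required. The real conceptual content is Lemma~\ref{lem:bottleneckDTW}, which lets us cross from the bottleneck world of persistence diagrams into the DTW world of time series; once that bridge is in place, the theorem is a clean composition of existing stability results.
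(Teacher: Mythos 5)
Your proposal is correct and follows essentially the same route as the paper: the $X=DTW$ case is the identical chain (inequality~(\ref{equ:relation between dtw and 1norm}) applied to the discretized curves, then Theorem~\ref{pcbound}, then Lemma~\ref{lem:bottleneckDTW}). The only cosmetic difference is in the $X=1$ case, where the paper invokes the classical bottleneck stability bound $W_\infty(D(f_s),D(f_t))\le\|f_s-f_t\|_\infty\le\|s-t\|_1$ directly, while you route through Lemma~\ref{lem:bottleneckDTW} and $\|s-t\|_{DTW}\le\|s-t\|_1$; both chains appear in the paper's own summary of inequalities and yield the same constant.
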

\begin{proof}
 When $X = 1$, the proof is the direct result of the general bound persistence curves and the stability result for the bottleneck distance on persistence diagrams \cite{cohen2007stability}. Note that for any time series $s$, $f_s$ satisfies this regularity condition due to it being continuous and having finitely many critical values, hence we benefit from the diagram stability theorem.

To see the case of $X = DTW$,
we first ease notation by writing $D_s$ for $D(f_s)$, $D_t$ for $D(f_t)$, $\bar{P}_s$ for $\bar{P}(D_s,\psi,T)$ and $\bar{P}_t$ for $\bar{P}_t(D_t,\psi,T)$.

Then by \eqref{equ:relation between dtw and 1norm}, Theorem~\ref{pcbound}, and Lemma~\ref{lem:bottleneckDTW}, respectively, we have
\begin{align*}
    \|\bar{P}_1 - \bar{P}_2\|_{DTW}&\le \|\bar{P}_1 - \bar{P}_2\|_1\\
    &\le CW_\infty(D_1,D_2)\\
&    \le C\|s-t\|_{DTW}.
\end{align*}
\end{proof}

We summarize all relations among different metrics that we have discussed in this paper:  
\begin{align*}
    \|P_s - P_t \|_{DTW}\leq \|P_s - P_t\|_1 &\leq C W_{\infty}(D_s, D_t) \\ &\leq C \|s-t\|_{DTW}\\& \leq C \|s-t\|_1.
\end{align*}

\begin{figure}
    \centering
    \subfigure[A successful transformation]{\includegraphics[width=0.75\textwidth]{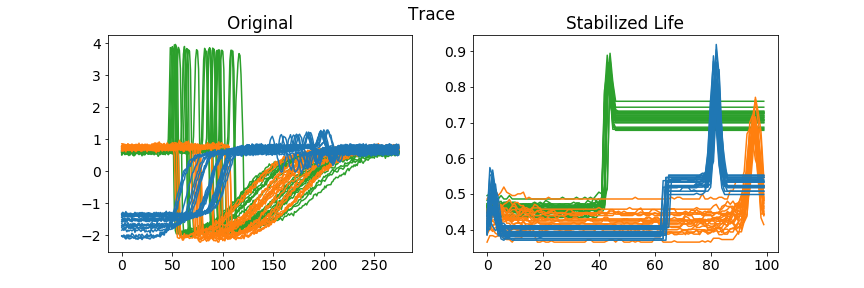}}
    \subfigure[An unsuccessful transformation]{\includegraphics[width=0.75\textwidth]{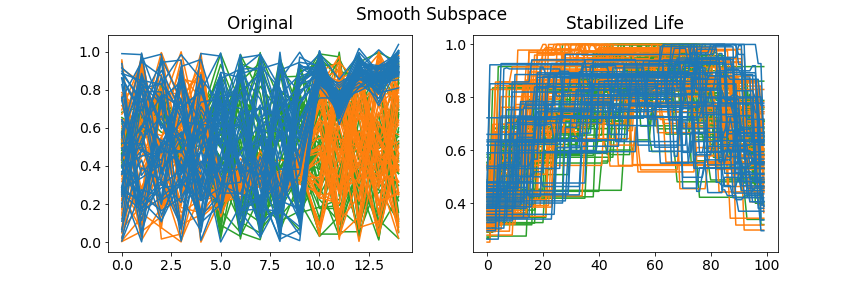}}
    \caption{Two examples of transformation via $\mathbf{sl}$. }
    \label{fig:transformationexample}
\end{figure}

We conclude this section by showing an example of proposed topological transformation. Figure \ref{fig:transformationexample} shows two examples of the result when we use the stabilized  life curve.   
We consider the ``Trace'' and ``SmoothSubspace'' dataset as shown in Figure~\ref{fig:transformationexample}(a) and (b) respectively.  Each dataset consists of three different classes which are shown in the blue, orange, and green, respectively.
Notice on the left of (a) which shows the original time series, two classes of time series (orange and green ones noticeably. Moreover, the members of each class seem to be a horizontal translation of the other members. On the right of (a), we see the result of the topological transformation. Observe that not only does the transformation separate the classes, but also fixes the translation. In (b), we see on the left the original and on the right the transformation. Unfortunately, after the transformation the overlap clearly becomes worse.  These examples motivate us to consider both geometric and topological information, and support exploration of the ensemble metric that we propose in the next section.

\section{New Metrics}
\label{sec:new metric}
Persistence curves encode topological information into a format that is compatible with machine learning algorithms. By utilizing the original space in combination with the topological transformation, we introduce an ensemble metric and define a new distance between time series. Let $\alpha\in(0,1]$. Let $\bar{P}(\cdot,\psi,T)$ be a discretized persistence curve. If $s$ and $t$ are time series with corresponding continuous functions $f_s$ and $f_t$. Finally let $\bar{P}_s = \bar{P}(D(f_s),\psi,T)$ and $\bar{P}_t = \bar{P}(D(f_t),\psi,T)$.

Define the distance between $s$ and $t$ to be \[d^{P}_{1}(\alpha; s,t) = \alpha\|s-t\|_1 + (1-\alpha)\|\bar{P}_s - \bar{P}_t\|_1.\]
Note that if $\alpha = 0$, then this is not a true metric of $s$ and $t$ since we cannot guarantee that $\|P(D(f),\psi,T) - P(D(f_t),\psi,T\|_1 = 0$ implies $s=t$.  We can define a similar metric by using the DTW distance: \[d^{P}_{DTW}(\alpha;s,t) = \alpha\|s-t\|_{DTW} +
(1-\alpha)\|\bar{P}_s - \bar{P}_t\|_{DTW}.\]

These hybrid distances take both topological and geometric information into account.  This new set of metrics contains two parameters.  The first parameter is the $P$, persistence curves. One may choose $P$ based on their prior knowledge about the dataset. \cite{chunglawson2019} provides a list of persistence curves. The other parameter, $\alpha$, is a weighting of the distance between the original time series and the distance between their topological transformations. Based on different datasets, we believe $\alpha$ can be optimized. In this work, we use $\displaystyle d^{\mathbf{sl}}_1$ and $\displaystyle d^{\mathbf{sl}}_{DTW}$ and  $\alpha \in \{0,0.25,0.5,0.75,1\}$ as demonstration. 

\section{Applications to Time Series Classification}
\label{sec:applications}
The UCR Time Series Archive \cite{UCRArchive2018} contains 128 datasets of time series gathered by different means for different scenarios and contributed by several different people. each dataset varies in train/test size and time series length. The task for each of these datasets is classification, and each dataset has a different number of classes. Some of the datasets were z-normalized and others were not. In our analysis, we did not attempt to alter the original time series in any other way than what was necessary to compute the topological transformation. The archive has reported three different benchmark models: Euclidean 2-norm distance 1-NN, DTW distance with a learned warping window 1-NN and DTW distance with a warping window of size 100 plut 1-NN. The administrator of the archive recommends those who are proposing new distance measurements to also perform the 1-NN test to allow for a proper comparison between the metrics.
We performed our analyses in Python 3.7.1. We use the \texttt{Perseus} \cite{perseus} software to compute persistence diagrams. In particular we make use of the \texttt{cubtop} function of \texttt{Perseus}. To use Perseus, one must input positive integers. Thus, to compute persistence, we transformed the time series in the following way: given a time series $s$, we fill missing values with 0 and then map $s$ to a new time series $s'$ via the formula $s'_x = \mathrm{Round}\left(1+\frac{s_x - \min(s)}{\max(s)-\min(s)}\cdot 100\right)$ for $x\in[n]$.  Then by feeding this new time series into \texttt{Perseus}, we obtain a persistence diagram. We also note that due to the transformation $s'$, we obtain a natural mesh $\{0,1,\ldots 100\}$ to compute a discretized persistence curve via the \texttt{PersistenceCurves} python package \cite{pcpackage}. Note that by design, this mesh captures all possible birth and death values of diagrams arising from this process. Once we obtain a diagram $D_{s'}$, we compute the stabilized life, $\mathbf{sl}(D_{s'})$. For each folder in the UCR time series database, we performed 1-NN classifications with the $d^{\mathbf{sl}}_1$ and $d^{\mathbf{sl}}_{DTW}$.
\begin{figure}
    \centering
    \includegraphics[width=0.5\textwidth]{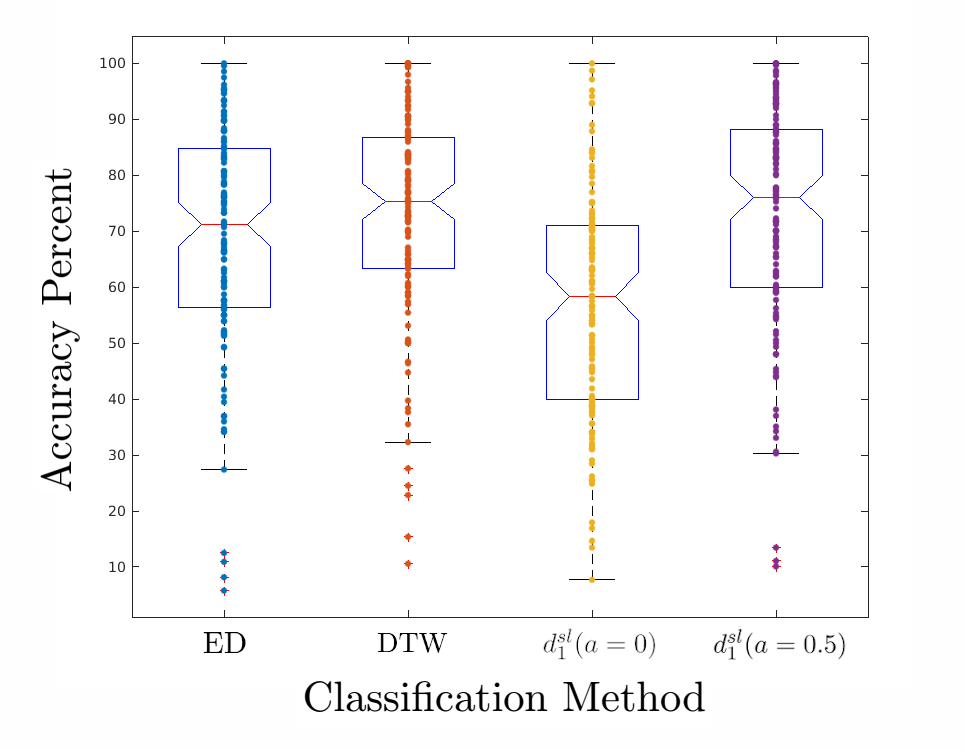}
    \caption{Classification scores for each distance.}
    \label{fig:boxplot}
\end{figure}

Figure \ref{fig:boxplot} shows a boxplot of scores for each model considered. Notice that the as suggested in Section~\ref{sec:PCDTW}, the persistence curve alone $(\alpha=0)$ does not perform well. We can see a comparison of each model and that the ensemble models perform well.

We moved on to testing the usefulness of these two metrics by measuring their performance on the UCR Time Series Archive via the 1-NN test. For the $d^{\mathbf{sl}}_1(\alpha=0.5)$, we tested all 128 datasets in the archive. We found that the metric outperformed the reported Euclidean distance (ED) benchmark on the UCR Time Series by $97$ out of $128$ datasets as shown in Figure~\ref{fig:EnsembleEDScores}(a).  Each plot in Figure \ref{fig:EnsembleEDScores} shows the difference in the accuracy scores between the given models for each of the UCR datasets in alphabetical order.  For more details about performance of individual datasets, we include the Excel file of our main results in the supplementary file.  A similarly good result came from weighting the topological features higher as shown in Figure~\ref{fig:EnsembleEDScores}(c).  

If we compare this metric ($d^{\mathbf{sl}}_1 (\alpha=0.5)$) to the UCR reported DTW (learned\textunderscore{w}) benchmark, our results are better $43$ out of $128$ times and $41$ times for $d^{\mathbf{sl}}_1 (\alpha=0.25)$ as shown in Figure \ref{fig:EnsembleEDScores}(b) and (d). 

The DTW ensemble metric was also tested on all $128$ UCR datasets.
In these experiments, we did not make use of a warping window and instead calculated the DTW distance by using the full time series. Out of these 128 datasets, our results showed $d^{\mathbf{sl}}_{DTW}(\alpha=0.5)$ are better than DTW (learned\textunderscore{w}) 61 times and tied 3 times.

When using $d^{\mathbf{sl}}_{DTW}(\alpha=0.75)$, our results are better than the calculated DTW  64 times, tied 3 times. Figure \ref{fig:EnsembleDTWScores} shows the differences in accuracy between the given models. We see the topological information is competitive with the learned warping window. It would be interesting to combine these two methods.

\begin{figure}
    \centering
    \subfigure[$\alpha=0.5$]{
    \includegraphics[width=0.23\textwidth]{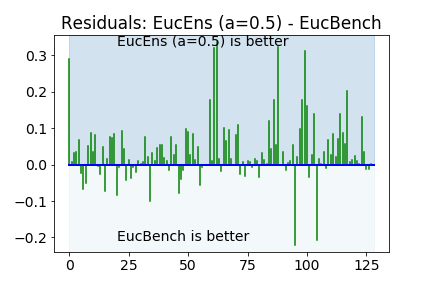}}
    \subfigure[$\alpha=0.5$]{\includegraphics[width=0.23\textwidth]{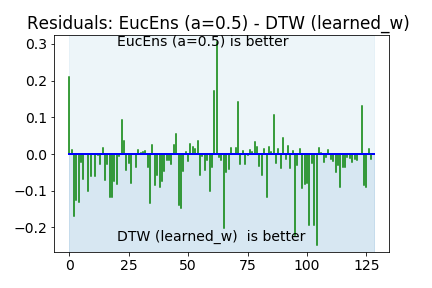}}
    \subfigure[$\alpha = 0.25$]{\includegraphics[width=0.23\textwidth]{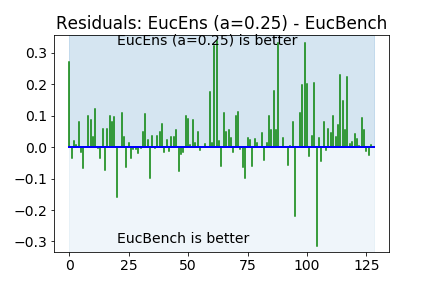}}
    \subfigure[$\alpha = 0.25$]{\includegraphics[width=0.23\textwidth]{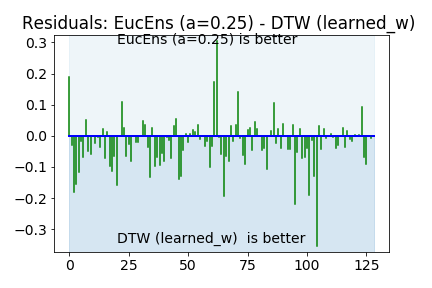}}
    \caption{Residual plots for difference in accuracy between ensembles ($d^{\mathbf{sl}}_1(\alpha)$) and benchmarks}\label{fig:EnsembleEDScores}
\end{figure}

\begin{figure}
    \centering
    \subfigure[$\alpha = 0.5$]{
    \includegraphics[width=0.23\textwidth]{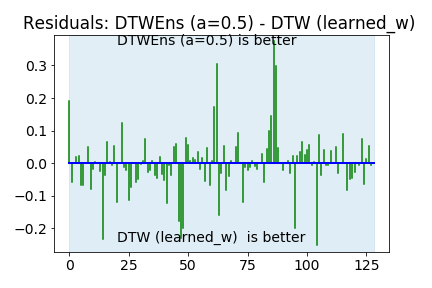}}
    \subfigure[$\alpha = 0.5$]{\includegraphics[width=0.23\textwidth]{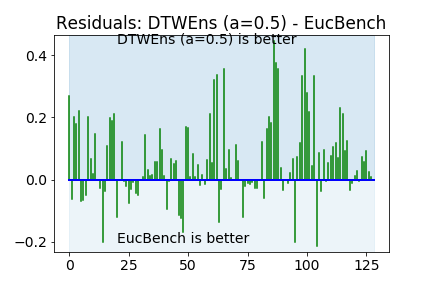}}
    \subfigure[$\alpha = 0.75$]{\includegraphics[width=0.23\textwidth]{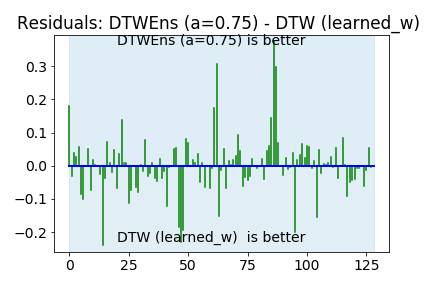}}
    \subfigure[$\alpha = 0.75$]{\includegraphics[width=0.23\textwidth]{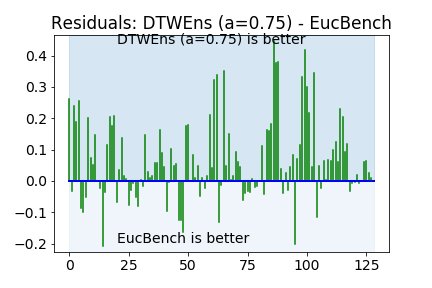}}
    \caption{Performance of DTW ensemble metric ($d^{\mathbf{sl}}_{DTW}(\alpha)$) with $\alpha\in\{0.5,0.75\}$  on 35 UCR datasets}\label{fig:EnsembleDTWScores}
\end{figure}

\section{Conclusion}
With the growing popularity of topological data analysis and computational topology as a whole, more researchers seek to understand data by studying it's shape. In this work, we proposed a transformation of time series based on topology and in particular persistence curves. From here we proposed two families ensemble metrics that measures distance between two time series by considering the time series itself as well as their topological transforms and computes a weighted average of either the Euclidean or DTW distance on the respective parts. We provided a general stability result for the DTW distance as well as a concrete result for the stabilized life curve, which was the persistence curve we used in our experiments. 

These initial results are promising and pave the way for future exploration through the addition of not only more datasets from the archive, but also for the inclusion of a warping window for DTW distance. Moreover, we focused only on the stabilized life curve in this work. The general nature of the persistence curve framework allows for a rich toolbox of curves to define or select from. With this in mind, we can ask the deeper question of curve selection. That is, given a dataset, is it possible to systematically decide, perhaps with a given family of persistence curves, which curve will perform optimally relative that family. On a broader level, we can ask to what extent can we develop a statistical/machine learning theory around this framework. These questions point to a multitude of research and application avenues in the future.

\bibliography{example_paper}

\begin{thebibliography}{10}

\bibitem{adams2017persistence}
Henry Adams, Tegan Emerson, Michael Kirby, Rachel Neville, Chris Peterson,
  Patrick Shipman, Sofya Chepushtanova, Eric Hanson, Francis Motta, and Lori
  Ziegelmeier.
\newblock Persistence images: A stable vector representation of persistent
  homology.
\newblock {\em The Journal of Machine Learning Research}, 18(1):218--252, 2017.

\bibitem{persistentEntropy}
N.~Atienza, R.~Gonzalez-Diaz, and M.~Soriano-Trigueros.
\newblock A new entropy based summary function for topological data analysis.
\newblock {\em Electronic Notes in Discrete Mathematics}, 68:113 -- 118, 2018.
\newblock Discrete Mathematics Days 2018.

\bibitem{berndt1994using}
Donald~J Berndt and James Clifford.
\newblock Using dynamic time warping to find patterns in time series.
\newblock In {\em KDD workshop}, volume~10, pages 359--370. Seattle, WA, 1994.

\bibitem{bubenik2015statistical}
Peter Bubenik.
\newblock Statistical topological data analysis using persistence landscapes.
\newblock {\em The Journal of Machine Learning Research}, 16(1):77--102, 2015.

\bibitem{chung2019persistent}
Yu-Min Chung, Chuan-Shen Hu, Yu-Lun Lo, and Hau-Tieng Wu.
\newblock A persistent homology approach to heart rate variability analysis
  with an application to sleep-wake classification.
\newblock {\em arXiv preprint arXiv:1908.06856}, 2019.

\bibitem{chunglawson2019}
Yu-Min Chung and Austin Lawson.
\newblock Persistence curves: A canonical framework for summarizing persistence
  diagrams, 2019.

\bibitem{cohen2007stability}
David Cohen-Steiner, Herbert Edelsbrunner, and John Harer.
\newblock Stability of persistence diagrams.
\newblock {\em Discrete \& Computational Geometry}, 37(1):103--120, 2007.

\bibitem{UCRArchive2018}
Hoang~Anh Dau, Eamonn Keogh, Kaveh Kamgar, Chin-Chia~Michael Yeh, Yan Zhu,
  Shaghayegh Gharghabi, Chotirat~Ann Ratanamahatana, Yanping, Bing Hu, Nurjahan
  Begum, Anthony Bagnall, Abdullah Mueen, Gustavo Batista, and Hexagon-ML.
\newblock The ucr time series classification archive, October 2018.
\newblock \url{https://www.cs.ucr.edu/~eamonn/time_series_data_2018/}.

\bibitem{edelsbrunner2010computational}
H.~Edelsbrunner and J.~Harer.
\newblock {\em Computational {T}opology: An Introduction}.
\newblock Miscellaneous Books. American Mathematical Society, 2010.

\bibitem{edelsbrunner2000topological}
Herbert Edelsbrunner, David Letscher, and Afra Zomorodian.
\newblock Topological persistence and simplification.
\newblock In {\em Proceedings 41st Annual Symposium on Foundations of Computer
  Science}, pages 454--463. IEEE, 2000.

\bibitem{pcpackage}
Austin Lawson.
\newblock Persistencecurves.
\newblock \url{https://github.com/azlawson/PersistenceCurves}.

\bibitem{li2015identification}
Li~Li, Wei-Yi Cheng, Benjamin~S Glicksberg, Omri Gottesman, Ronald Tamler, Rong
  Chen, Erwin~P Bottinger, and Joel~T Dudley.
\newblock Identification of type 2 diabetes subgroups through topological
  analysis of patient similarity.
\newblock {\em Science translational medicine}, 7(311):311ra174--311ra174,
  2015.

\bibitem{li2019topological}
Max~Z Li, Megan~S Ryerson, and Hamsa Balakrishnan.
\newblock Topological data analysis for aviation applications.
\newblock {\em Transportation Research Part E: Logistics and Transportation
  Review}, 128:149--174, 2019.

\bibitem{marchese2018signal}
Andrew Marchese and Vasileios Maroulas.
\newblock Signal classification with a point process distance on the space of
  persistence diagrams.
\newblock {\em Advances in Data Analysis and Classification}, 12(3):657--682,
  2018.

\bibitem{perseus}
Vidit Nanda.
\newblock Perseus.
\newblock \url{http://www.sas.upenn.edu/~vnanda/perseus}.

\bibitem{seversky2016time}
Lee~M Seversky, Shelby Davis, and Matthew Berger.
\newblock On time-series topological data analysis: New data and opportunities.
\newblock In {\em Proceedings of the IEEE Conference on Computer Vision and
  Pattern Recognition Workshops}, pages 59--67, 2016.

\bibitem{pht}
Katharine Turner, Sayan Mukherjee, and Doug~M. Boyer.
\newblock {Persistent homology transform for modeling shapes and surfaces}.
\newblock {\em Information and Inference: A Journal of the IMA}, 3(4):310--344,
  12 2014.

\bibitem{umeda2017time}
Yuhei Umeda.
\newblock Time series classification via topological data analysis.
\newblock {\em Information and Media Technologies}, 12:228--239, 2017.

\bibitem{venkataraman2016persistent}
Vinay Venkataraman, Karthikeyan~Natesan Ramamurthy, and Pavan Turaga.
\newblock Persistent homology of attractors for action recognition.
\newblock In {\em 2016 IEEE international conference on image processing
  (ICIP)}, pages 4150--4154. IEEE, 2016.

\bibitem{wang2018topological}
Yuan Wang, Hernando Ombao, Moo~K Chung, et~al.
\newblock Topological data analysis of single-trial electroencephalographic
  signals.
\newblock {\em The Annals of Applied Statistics}, 12(3):1506--1534, 2018.

\end{thebibliography}
\bibliographystyle{plain}

\end{document}